\newtheorem{proof}{Proof}
\newcommand{\etal}{et al.\@\xspace}
\newtheorem{theorem}{Theorem}
\begin{document}

\title{A Scalable Algorithm for Privacy-Preserving Item-based Top-N Recommendation}

\author{Yingying Zhao$^\dag$, Dongsheng Li$^\S$, Qin Lv$^\ddag$, Li Shang$^{\dag, \ddag}$\\
    $^\dag$ Tongji University, Shanghai 201804 P. R. China\\
   $^\S$ IBM Research - China, Shanghai 201203 P. R. China \\
 $^\ddag$ University of Colorado Boulder, Boulder, CO 80309 USA\\
 }
\maketitle

\begin{abstract}
Recommender systems have become an indispensable component in online services during recent years. Effective recommendation is essential for improving the services of various online business applications. However, serious privacy concerns have been raised on recommender systems requiring the collection of users' private information for recommendation. At the same time, the success of e-commerce has generated massive amounts of information, making scalability a key challenge in the design of recommender systems. As such, it is desirable for recommender systems to protect users' privacy while achieving high-quality recommendations with low-complexity computations.

This paper proposes a scalable privacy-preserving item-based top-N recommendation solution, which can achieve high-quality recommendations with reduced computation complexity while ensuring that users' private information is protected. Furthermore, the computation complexity of the proposed method increases slowly as the number of users increases, thus providing high scalability for privacy-preserving recommender systems. More specifically, the proposed approach consists of two key components: (1) MinHash-based similarity estimation and (2) client-side privacy-preserving prediction generation. Our theoretical and experimental analysis using real-world data demonstrates the efficiency and effectiveness of the proposed approach.
\end{abstract}
\begin{keywords}
item-based, top-N recommendation, privacy, scalability
\end{keywords}

\section{Introduction}
\label{sec:intro}
With the explosive growth of information on the Internet, recommender systems have become indispensable in many online services~\cite{Adomavicius05}. As one of the most important techniques in recommender systems, collaborative filtering (CF)~\cite{he2017modeling,li2012interest} has been broadly adopted, such as in Amazon~\cite{Li2016311,Amazon}, MovieLens~\cite{Canny02}, YouTube~\cite{davidson2010youtube}, and so on. Such broad applications of CF raise serious concerns about the leakage of users' privacy.

The privacy concern originates from the basic idea behind CF techniques. For instance, item-based top-N recommendation, as one of the widely used CF techniques, assumes that a user may be interested in items that are similar to the items that he/she liked before.
More specifically, the problem of top-N recommendation aims to provide an ordered list of $N$ items to a user. Item-based recommendation works by first collecting users'  ``ratings'' of items. Then, items are profiled by analyzing users' ratings for them. At last, a user is recommended with  items that have a high similarity with his/her historically highly-rated items.  Through the recommendation, users can find items of their interest, such as movies, music, or other things. However, to profile items and perform recommendations for users, sensitive information, such as user demographics and user ratings, are collected by recommender systems, giving rise to serious privacy concerns of individual users~\cite{Li2016311,Adomavicius05,Li2017440,li2011yana,li2011pistis}. In addition, profiling and recommending items based on all users' information lead to scalability issues because of the explosive growth of online information.

Recent research works have aimed to tackle the privacy-preserving issues for individual users based on CF~\cite{ozturk2015existing}.
Scalability is the primary limiting factor to existing cryptography-based methods~\cite{Li2016311,Canny02,Esma08,Kikuchi09}. These methods adopt cryptography, such as homomorphic encryptions, to hide true user ratings during computation. These solutions require computation-intensive encryption and decryption operations. Hence they do not scale well given the large number of users and items~\cite{Canny02,Esma08,Kikuchi09}. Compared with cryptography-based solutions, random perturbation based methods do not conduct intensive computations but protect users' privacy by perturbing users' ratings, such as adding random noise, before sending the users' information to the server~\cite{Polat03,Zhang06,McSherry09}. As such, random perturbation based methods are efficient and easy to implement. However,  these methods trade accuracy for privacy, yet the protection of user privacy is not guaranteed. As shown in~\cite{Huang05}, the server can partially recover valid user data from perturbed data using learning techniques.

An ideal privacy-preserving recommender system should guarantee user privacy protection without compromising recommendation accuracy or efficiency. However,  existing privacy-preserving CF methods trade either efficiency (cryptography-based methods) or accuracy (random perturbation based methods) for privacy. Therefore, the focus of this work is to address the above challenges and develop a scalable solution capable of preserving privacy while achieving high-quality recommendations with low-complexity computations.

Our work is motivated by one key observation -- the overloading information generated on the Internet has a degree of redundancy for recommendations, as some users have a similar preference. As such,  it is possible to develop a scalable approach by collecting some anonymous users' information while achieving high-quality recommendations. Therefore, this work proposes an approach for item-based top-N recommendations, which can cope with a large number of user and item scenarios while achieving efficient privacy-preserving recommendation. Different from the methods above, the proposed approach neither performs additional calculation nor changes the original users' information to preserve users' privacy. Specifically, our proposed approach works as follows: (1) to protect individual users' privacy, we first use anonymous random walks to collect users' information, thus eliminating the correspondence between individual users and their information; (2) item similarities are estimated online using the MinHash-based method through the received information and the recommendations are generated locally; and (3) by restricting the number of random walks, the computation cost is reduced significantly, especially for scenarios with a large number of users and/or items.

The contributions of this work are summarized as follows:
\begin{enumerate}
\item This work proposes a scalable algorithm for privacy-preserving item-based top-N recommendations, which can achieve high-quality recommendations with low-complexity computations. Compared with existing methods, the computation time of the proposed method increases slowly as the number of users increases, thus providing high scalability for privacy-preserving recommender systems.

\item The evaluation results in three real-world data sets demonstrate that the proposed method can be more efficient than non-privacy-preserving item-based top-N recommendation methods. Specifically, when the accuracy loss is less 1.0\%, the corresponding computation time is reduced by 20.99\%, 57.35\%, and 62.81\% on the Last.fm data set, on the Jester data set, and on the MovieLens 20M data set, respectively.
\end{enumerate}

The rest of this paper is organized as follows.
Section~\ref{sctn::rltd} surveys the related works.
Section~\ref{sctn::prblm} describes the problem formulation.
Section~\ref{sctn::alg} presents the proposed scalable algorithm for privacy-preserving item-based top-N recommendation.
Section~\ref{sctn::exp} discusses experimental results on the real-world data sets.
Finally, Section~\ref{sctn::cnclsn} concludes this work.

\section{Related Work}
\label{sctn::rltd}
Existing privacy-preserving recommender systems can be classified into two main categories: cryptography based methods~\cite{Li2016311,Canny02,Esma08,Kikuchi09} and random perturbation based methods~\cite{Polat03,Zhang06,McSherry09}.

\subsection{Cryptography Based Methods}
Cryptography based methods adopt cryptography, such as homomorphic encryptions, to hide true user ratings during computation.

Kikuchi~\etal proposed using homomorphic encryption to calculate user similarities, item recommendation scores and decrypting the scores by a set of trusted authors~\cite{kikuchi2009privacy}. Since their method has all users' original data involved in the calculation, users can obtain recommendations without privacy violation. A similar method has also been introduced in~\cite{zhan2010privacy}, in which Zhan~\etal constructed a more efficient privacy-preserving collaborative recommender system based on the scalar product protocol by comparing with major cryptology approaches. Canny~\etal proposed an algorithm whereby a community of users can compute a public ``aggregate" of their data that does not expose individual users' data~\cite{Canny02}. They used homomorphic encryption to allow sums of encrypted vectors to be computed and decrypted without exposing individual users' data. 
These homomorphic encryption based approaches are computation inefficient because all computations are performed on encrypted data. Additionally, cryptography based methods suffer from the scalability issue since encryption and decryption operations are computation intensive and do not scale well given a large number of users and items.

\subsection{Random Perturbation Based Methods}
Random perturbation based methods perturb users' ratings to prevent the server from obtaining true users' ratings~\cite{Polat03,Zhang06,McSherry09}.

Polat~\etal proposed a privacy-preserving CF method to perturb individual user's original data by adding a random number, while accurately estimating the aggregation data from a large number of users~\cite{Polat03}. Casino~\etal proposed a $k$-anonymous approach to protect the user's privacy, in which they clustered similar users and profiled the clusters. Then the profile in the same cluster had the capability of representing the users in the cluster, achieving perturbing individual users' data and preserving users' privacy. Storing users' profiles in a distributed manner and achieving recommendations is another option for perturbation based privacy-preserving recommender systems. Shokri~\etal proposed a distributed mechanism to increase privacy-preserving while achieving accurate recommendations~\cite{shokri2009preserving}. In their methods, users first store their profiles on their own sides (called offline profiles). Then, the offline profiles are partly merged with the profiles of similar users. After that, the offline profiles are uploaded to a central server periodically for participating in recommendations. A different approach is presented in~\cite{boutet2016privacy} that relies on a two-fold mechanism implemented on a decentralized user-based CF. In the work, users' exact profiles are prevented from exchanging on their own sides while constructing interest-based topology, through which obfuscated profiles are still extracted, ensuring the quality of recommendations. Random perturbation based methods trade accuracy for privacy, yet the protection of user privacy is not guaranteed.
As shown in~\cite{Huang05}, the server can partially recover true user data
from perturbed data using learning techniques. 

\section{Problem Formulation}
\label{sctn::prblm}
\subsection{Privacy Issue in Item-based Top-N Recommendation}
The main purpose of recommender systems is to estimate user ratings
of items that have not been seen by the user~\cite{Adomavicius05}.
Generally, recommender systems estimate users' ratings for a specific
item based on users' previous ratings on other items. Thus, recommender
systems need to collect users' item ratings before estimating users'
ratings of unrated items. This is a violation of user privacy, since most users
do not wish to expose their item ratings~\cite{Berkovsky07}.
Sensitive user information (e.g., user preferences) may be collected,
analyzed, and even sold when a company declares bankruptcy~\cite{Canny02}. Ideally, users of recommender systems should be able to obtain
high-quality recommendations efficiently, without exposing their
item ratings to the recommendation server or any other third parties.
Generally, two key steps that are required
in item-based top-N recommendation~\cite{Deshpande04} are described in the following two subsections.
\subsection{Item Similarity Computation}
A variety of methods can be adopted to calculate the similarities between items, such as cosine similarity~\cite{Sarwar,tan2017efficient}, correlation-based similarity~\cite{Sarwar} and Jaccard similarity~\cite{Das07}, etc.
For binary ratings (i.e., 0 means dislike and 1 means like),
Jaccard similarity is the most widely adopted similarity measure.
Given two items $i$ and $j$, their Jaccard similarity is defined as follows~\cite{Das07}:

\begin{equation}
\label{eqn:sim}
	sim_{i,j}= Jaccard(U_{i},U_{j})= \dfrac{|U_{i}\bigcap U_{j}|}{|U_{i}\bigcup U_{j}|},
\end{equation}
where $U_{i}$ ($U_{j}$) is the set of users who like item $i$ ($j$).
Computing the Jaccard similarity is time-consuming, especially when the number of
items is large. Thus, cryptography based privacy-preserving
recommendation is not practical for large datasets.

\subsection{Predicting Item Ratings and Top-N Recommendation}
After obtaining item similarities, predictions for unrated items are computed
by taking a weighted average of a target user's past item ratings.
Given a target user $u$ and an unrated item $i$, the predicted rating of $u$ on $i$
can be computed as follows~\cite{Deshpande04}:
\begin{equation}
\label{eqn:score}
\hat{r}_{u,i}= \dfrac{\sum_{j\in I_u}r_{u,j}*sim_{i,j}}{\sum_{j\in I_u}sim_{i,j}},
\end{equation}
where $I_u$ is the set of items rated by $u$ and $r_{u,j}$ is $u$'s rating
on item $j$. 

However, the rating $r_{u,j}$ is 1 if the user $u$ like the $j$th item and 0 otherwise for binary ratings. The predicted rating $r_{u,j}$ calculated by Equation~\ref{eqn:score} is always a constant for the $j$th item. 
Therefore, for binary ratings, the predicted rating of $u$ on $i$
can be rewritten as Equation~\ref{eqn:score2}~\cite{Das07}:
\begin{equation}
\label{eqn:score2}
\hat{r}_{u,i}= \sum_{j\in I_u}r_{u,j}*sim_{i,j}.
\end{equation}

After computing $\hat{r}_{u,i}$ for all unrated items of $u$, the similarity ranking of the unrated items of user $u$ can be obtained, and the items with the highest $N$ predicted ratings will be recommended to $u$. 

\section{Scalable Privacy-Preserving Item-based Top-N Recommendation}
\label{sctn::alg}
This section first gives the overview of the proposed scalable privacy-preserving item-based Top-N recommendation approach, then details the process of the approach. After that, the theoretical analysis of the efficiency and effectiveness of the proposed approach is presented.

\subsection{Solution Overview}
In this work, our goal is to protect user privacy and reduce computation complexity with minimum loss in recommendation accuracy.
We achieve this by proposing the following techniques:
1) a MinHash-based privacy-preserving similarity estimation method, which can estimate the Jaccard similarity of items with high efficiency and protect user privacy during the computation process; and 2) a client-side privacy-preserving prediction generation method, which can predict item ratings for users with privacy protection. Unlike existing works, which trade
either efficiency (e.g., cryptography-based methods) or accuracy (e.g., perturbation-based methods) for privacy, our solution can guarantee privacy protection while supporting flexible balancing between accuracy and efficiency. The flowchart of the proposed approach is shown as Figure~\ref{fig:overview}.

\begin{figure}[h]
	\centering
	\includegraphics[width=0.50\textwidth]{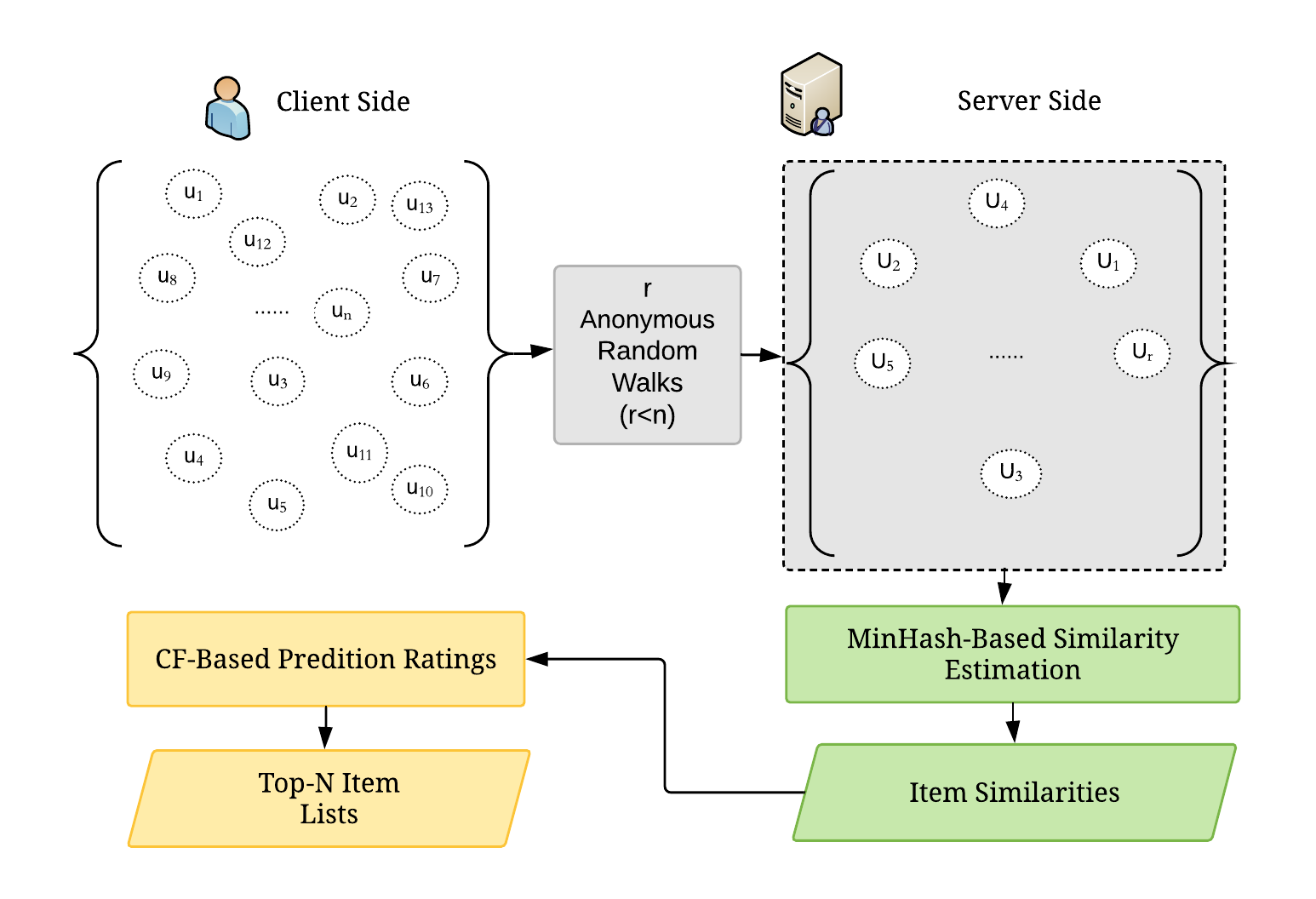}
	\caption{Flowchart of the proposed privacy-preserving top-$N$ recommendation. }
	\label{fig:overview}
\end{figure}

\subsection{MinHash-based Privacy-Preserving Jaccard Similarity Estimation}

MinHash is an efficient method for estimating Jaccard similarity between
two sets~\cite{minhash}. Let $\mathcal{H}=\{h_1,h_2,\cdots,h_k\}$ denote
$k$ independent random perturbations on elements (hash functions), for any two sets $S_i$ and $S_j$,
they have the following property~\cite{minhash}:
\begin{equation}
\label{eqn:jaccard}
{\bf Pr}_{h\in \mathcal{H}} \big[h(S_i)==h(S_j)\big] = \frac{|S_i\cap S_j|}{|S_i\cup S_j|} = Jaccard (S_i,S_j).
\end{equation}
In Equation~\ref{eqn:jaccard}, the number of hash functions --- $k$ is the key
factor for determining estimation accuracy and efficiency, i.e.,
high $k$ values indicate higher estimation accuracy but lower
computation efficiency. Theoretically, if $k$ is large enough, then
the estimation can be as accurate as
computation using Equation~\ref{eqn:sim}. Tradeoffs between accuracy
and efficiency of the method are further discussed in Section~\ref{sec:analysis_accuracy}.

Based on MinHash, Jaccard similarities among item pairs can be efficiently
estimated, but user privacy should be strictly protected during this
process. To this end, a privacy-preserving MinHash protocol is proposed,
in which users choose to add their information in anonymous random walks.
Thus, the server and any other user cannot know which piece of data
are from a target user and whether a target user has added his/her data,
so that user privacy can be protected. The detailed procedure for
estimating Jaccard similarity based on privacy-preserving MinHash method
is presented in Algorithm~\ref{alg:hash}.

\begin{algorithm}[h!]%
\caption{PrivateJaccard($U$, $I$, $\mathcal{H}$)}
\label{alg:hash}
\begin{algorithmic}[1]
\Require $U$ and $I$ are the sets of users and items, and $\mathcal{H}$ is the set of hash functions.
\ForAll {$i,j\in I (i\neq j)$}
    \State $n_{i,j} = 0$; ($n_{i,j}$ records the number of times that hash values of $i$ and $j$ are equal)
\EndFor
\For {each $h\in \mathcal{H}$}
    \State Let $I_{h} = \emptyset$, the server randomly selects $u\in U$ as $u^*$;
    \While{$I_{h}$ is not sent to server}
        \State $I_{u^*}$ is the items liked by $u^*$, and $\rho_{u^*}\in (0,1)$ is its predefined probability;
        \State $u^*$ randomly generates $0\le\rho\le 1$;
        \If {$\rho<\rho_{u^*}$}
            \If {$I_{h} == \emptyset$}
                \State $I_{h} = I_{u^*}$;
            \Else
                \State $u^*$ sends $I_{h}$ to the server; Break;
            \EndIf
        \Else
            \State $u^*$ randomly chooses $u' \in U$, and sends $I_{h}$ to $u'$;
            \State $u^*\gets u'$;
        \EndIf
    \EndWhile
    \ForAll {$i,j\in I_{h} (i\neq j)$}
        \State $n_{i,j} \gets n_{i,j} + 1$;
    \EndFor
\EndFor
\ForAll {$i,j\in I (i\neq j)$}
    \State $Jaccard(i,j) = n_{i,j}/|\mathcal{H}|$;
\EndFor
\end{algorithmic}
\end{algorithm}

\subsection{Client-Side Privacy-Preserving Prediction Generation}
Based on Algorithm~\ref{alg:hash}, the recommender system
can obtain the Jaccard similarities among all item pairs. Then,
the server can send the item similarities to all users. After obtaining
item similarities, each client can compute its own prediction scores based
on Equation~\ref{eqn:score2} and recommend items with high $\hat{r}$ values
to its user. Since the computation of this step is fully accomplished on the
client side, user privacy can be strictly protected.

\subsection{Discussion}
\subsubsection{Complexity Analysis}
\label{sec:analysis_complexity}
The complexity of Jaccard similarity computation is $O(m^2n)$,
where $m$ is the number of items and $n$ is the number of users. But
based on Algorithm~\ref{alg:hash}, the computation complexity is reduced
to $O(k+m^2)$, where $k$ is the number of hash functions. This is because
the server can go through the $k$ hash results, and find out all the cases
that $h(i)==h(j)$. Then, the server can calculate the probability of $h(i)==h(j)$
for all $m^2/2$ item pairs. For prediction generation, the server side computation
is zero. And the computation complexity for each user is $O(lm)$, where $l$
is the number of items rated by the user.

\subsubsection{Accuracy of Similarity Estimation}
\label{sec:analysis_accuracy}
The similarity estimation accuracy of Algorithm~\ref{alg:hash}
is determined by $k$, the number of hash functions. Here, we analyze
the relationship between estimation accuracy and $k$ theoretically;
detailed statistical results are presented in the evaluation section.
We first introduce the Chernoff Bound before analyzing the accuracy of the method.

\begin{theorem}[Chernoff Bound~\cite{Chernoff}]
Given a set of r independent identically distributed (iid) random variables
${X_1,X_2,\ldots,X_r}$, satisfying that $-\Delta\leq X_i \leq \Delta$ and $E[X_i]=0$
$(i\in[1,\dots,r])$. Let $M={\mathop{\sum}^{r}_{i=1}X_i}$. Then for any $\alpha\in(0,\frac{1}{2})$,
$Pr[|M|>\alpha]\leq2\exp(\frac{-\alpha^2}{2r\Delta^2})$.
\end{theorem}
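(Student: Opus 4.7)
The plan is to follow the classical Cram\'er--Chernoff recipe: bound the upper tail via the exponential moment, then symmetrize to handle $|M|$. Concretely, for any $t > 0$, Markov's inequality applied to $e^{tM}$ gives
\begin{equation*}
\Pr[M > \alpha] \;\le\; e^{-t\alpha}\, E[e^{tM}] \;=\; e^{-t\alpha} \prod_{i=1}^{r} E[e^{tX_i}],
\end{equation*}
where the factorization uses independence of the $X_i$. The whole proof then reduces to controlling the moment generating function of a single bounded, centered $X_i$.

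The key lemma I would establish is a Hoeffding-type bound: for $X$ with $E[X]=0$ and $X \in [-\Delta,\Delta]$, one has $E[e^{tX}] \le \exp(t^2\Delta^2/2)$. I would prove this by convexity: since $x \mapsto e^{tx}$ is convex, for $x\in[-\Delta,\Delta]$
\begin{equation*}
e^{tx} \;\le\; \tfrac{\Delta - x}{2\Delta}\, e^{-t\Delta} + \tfrac{\Delta + x}{2\Delta}\, e^{t\Delta}.
\end{equation*}
Taking expectations and using $E[X]=0$ collapses the right-hand side to $\cosh(t\Delta)$, and the elementary inequality $\cosh(y)\le e^{y^2/2}$ (shown by comparing Taylor coefficients) finishes the lemma. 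This single-variable bound is the only place where the boundedness hypothesis $|X_i|\le\Delta$ is actually used, and it is the main obstacle --- the rest of the argument is bookkeeping.

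Plugging the lemma in yields $\Pr[M > \alpha] \le \exp(-t\alpha + r t^2 \Delta^2/2)$. I would then optimize over $t>0$. The exponent is a convex quadratic in $t$, minimized at $t^* = \alpha/(r\Delta^2)$, which produces the clean bound $\Pr[M > \alpha] \le \exp(-\alpha^2/(2r\Delta^2))$. Note that $\alpha < 1/2$ is not actually needed for this step --- the bound is valid for all $\alpha > 0$ --- so I would simply remark that the hypothesis on $\alpha$ is harmless.

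Finally, I would handle the two-sided event $\{|M| > \alpha\}$ by applying the same argument to the sequence $\{-X_i\}$, which satisfies the same hypotheses, obtaining $\Pr[M < -\alpha] \le \exp(-\alpha^2/(2r\Delta^2))$. A union bound over $\{M > \alpha\}$ and $\{M < -\alpha\}$ introduces the factor of $2$ and yields the claimed inequality $\Pr[|M|>\alpha] \le 2\exp(-\alpha^2/(2r\Delta^2))$.
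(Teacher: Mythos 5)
Your proof is correct: the Markov/MGF step, the convexity (Hoeffding) lemma $E[e^{tX}]\le e^{t^2\Delta^2/2}$, the optimization at $t^*=\alpha/(r\Delta^2)$, and the union bound over the two tails together yield exactly the stated bound $2\exp\bigl(-\alpha^2/(2r\Delta^2)\bigr)$. Note, however, that the paper does not prove this statement at all --- it is quoted as a known result from the cited reference and used as a black box in the proof of the similarity-estimation theorem --- so there is no in-paper argument to compare against; yours is a standard, self-contained Cram\'er--Chernoff/Hoeffding derivation of the quoted inequality, and your side remark that the hypothesis $\alpha\in(0,\tfrac{1}{2})$ is not actually needed (the bound holds for all $\alpha>0$) is also accurate.
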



\begin{theorem}
\label{thm:similarity}
Given a set of hash functions ${h_1, h_2,\cdots, h_k}$ and items $I$,
$\forall i,j \subseteq I$ $(i\neq j)$, let $\hat{J}(U_i,U_j)$ and $J(U_i,U_j)$
be the estimated and true Jaccard similarity between $i$ and $j$,
then for any $\alpha\in(0,\frac{1}{2})$ and $k=(2/\alpha ^2)\ln (2/\delta)$,
$Pr[|\hat{J}(U_i,U_j)-J(U_i,U_j)| \leq \alpha]\geq 1- \delta$.
\end{theorem}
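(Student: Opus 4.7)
The plan is to reduce the claim to a direct application of the Chernoff bound stated just above. For each hash function $h_l \in \mathcal{H}$ (with $l = 1, \ldots, k$), define the indicator $Y_l = \mathbbm{1}[h_l(U_i) = h_l(U_j)]$. By the MinHash property in Equation~\ref{eqn:jaccard}, $E[Y_l] = J(U_i, U_j)$, and Algorithm~\ref{alg:hash} aggregates these into $\hat{J}(U_i, U_j) = \frac{1}{k}\sum_{l=1}^{k} Y_l$. Since the hash functions are independent, the $Y_l$ are i.i.d.\ Bernoulli variables.

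Next, I would center the indicators by setting $X_l = Y_l - J(U_i, U_j)$, so that $E[X_l] = 0$ and $X_l \in [-J, 1-J] \subseteq [-1, 1]$, i.e.\ $|X_l| \leq 1$, giving $\Delta = 1$ in the Chernoff statement. Let $M = \sum_{l=1}^{k} X_l$; then $|\hat{J}(U_i, U_j) - J(U_i, U_j)| = |M|/k$, so the event $\{|\hat{J} - J| > \alpha\}$ coincides with $\{|M| > k\alpha\}$.

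Applying the Chernoff bound (with $r = k$, $\Delta = 1$, and threshold $k\alpha$) gives
\begin{equation*}
\Pr\bigl[|\hat{J}(U_i,U_j) - J(U_i,U_j)| > \alpha\bigr] = \Pr[|M| > k\alpha] \leq 2\exp\!\Bigl(-\frac{(k\alpha)^2}{2k}\Bigr) = 2\exp\!\Bigl(-\frac{k\alpha^2}{2}\Bigr).
\end{equation*}
Substituting $k = (2/\alpha^2)\ln(2/\delta)$ makes the right-hand side equal to $\delta$, and taking the complement yields the claim.

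There is no real obstacle here beyond bookkeeping; the only point requiring care is the centering step that turns the $\{0,1\}$-valued MinHash indicators into zero-mean variables with $\Delta = 1$ so the hypotheses of the Chernoff bound are met, and the rescaling of the threshold from $\alpha$ (on the average $\hat{J}$) to $k\alpha$ (on the sum $M$). If one wanted a slightly tighter constant, one could note that $X_l \in [-J, 1-J]$ has range $1$ rather than $2$ and use Hoeffding's inequality directly, but the stated bound already delivers exactly the sample complexity $k = (2/\alpha^2)\ln(2/\delta)$ claimed in the theorem, so no refinement is needed.
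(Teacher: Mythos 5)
Your proof is correct and takes essentially the same route as the paper: both center the MinHash indicator variables and apply the stated Chernoff bound, the only difference being that the paper scales each variable by $1/k$ (so $\Delta = 1/k$ and the deviation threshold stays at $\alpha$), whereas you keep $\Delta = 1$ and move the factor $k$ into the threshold $k\alpha$ --- an equivalent reparametrization giving the same bound $2\exp(-k\alpha^2/2)$ and hence the same choice $k=(2/\alpha^2)\ln(2/\delta)$. The only cosmetic mismatch is that with threshold $k\alpha$ you formally exceed the $(0,\tfrac{1}{2})$ range written into the paper's Chernoff statement (which the paper's $1/k$-scaling respects), but since the underlying inequality is just Hoeffding's this has no mathematical consequence.
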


\begin{proof}
For any $l\in {[1,2,\cdots, k]}$, we have $h_l= 1$ if $h_l(U_i)=h_l(U_j)$.
Let $x_l = (1/k)(h_l-J(U_i,U_j))$. Then, $x_1, x_2, \cdots, x_k$ is a set of $k$
independent identically distributed random variables and $-\frac{1}{k} \leq x_l \leq \frac{1}{k}$.
Let $M=\sum^k_{l=1}x_l$, then $E[M] = 0$ (based on Equation~\ref{eqn:jaccard}).
Then, according to Chernoff Bound, let $\Delta = \frac{1}{k}$ and $r=k=(2/\alpha ^2)\ln (2/\delta)$,
we have $Pr[|\hat{J}(U_i,U_j)-J(U_i,U_j)| > \alpha]\leq \delta$, i.e.,
$Pr[|\hat{J}(U_i,U_j)-J(U_i,U_j)| \leq \alpha]\geq 1- \delta$.
\end{proof}

\subsubsection{Privacy Analysis}
\label{sec:analysis_privacy}
In the proposed method,
only item similarity computation, which are performed among users, may reveal
user privacy. Here, we prove that Algorithm~\ref{alg:hash} can strictly 
protect user privacy under the  semi-honest model~\cite{Goldreich},
in which users follow the computation protocols honestly except that they
can infer information based on intermediate data.
The ``privacy'' definition is adopted from Goldreich~\cite{Goldreich},
which  states
that a computation protocol is privacy-preserving if the view of each party
during the execution of the protocol can be simulated by a polynomial-time
algorithm knowing only the input and the output of the party.

\begin{theorem}
Algorithm~\ref{alg:hash} is privacy-preserving for users under the semi-honest model.
\end{theorem}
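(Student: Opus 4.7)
The plan is to follow the simulation paradigm for the semi-honest model, as invoked in the sentence that precedes the theorem, and to construct, for each party, a polynomial-time simulator that reproduces its view given only its input and output. I would treat the server and the individual users as two separate classes of party and argue each one in turn.

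To begin, I would make inputs, outputs, and in-protocol views explicit. The server's input is the collection of hash functions $\mathcal{H}$ together with the public identities in $U$ and $I$; its output is the table of similarities $\{n_{i,j}/|\mathcal{H}|\}$. Its view consists, for each $h\in\mathcal{H}$, of the uniformly chosen starting user $u^*$ and the aggregated item set $I_h$ that is eventually delivered. An individual user $u$'s input is its own profile $I_u$ and its predefined forwarding probability $\rho_u$; its output in this sub-protocol is empty, since similarities are only consumed in the later client-side prediction step. The in-protocol view of $u$ consists of its own coin flips and, whenever the walk visits $u$, the current aggregated set $I_h$ received from the previous walker.

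Next I would construct the server-side simulator. Given only the output, it picks $u^*$ uniformly (exactly as the real server does) and then rolls a forward random walk using the public $\rho_u$ values, collecting an $I_h$ whose distribution matches the real one because that distribution depends only on $\{\rho_u\}$ and on the publicly known profile sizes. The per-user simulator is simpler: the only non-trivial messages $u$ receives during the protocol are forwarded item sets $I_h$ from some anonymous previous walker, and since neither the path nor the identity of that walker is revealed to $u$, the simulator can sample $I_h$ from the same marginal distribution of aggregated sets over $U$, again computable from public parameters.

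The hard part, I expect, will be making it rigorous that the anonymous random walk genuinely hides the origin of $I_h$. Because exactly one walker writes the first nonempty $I_h$, in principle the server could attempt to invert the forwarding distribution to identify that contributor. The observations I would lean on are (i) the server picks the starting walker $u^*$ uniformly at random, so the prior over who first writes is already uniform, and (ii) each successive forwarding step is independent of any secret input, so the posterior distribution over the writer, conditioned on $I_h$, is a function only of public quantities and can be reproduced by the simulator. Establishing this posterior-is-public claim formally, and handling the corner case in which $I_h$ happens to be idiosyncratic enough to identify a user through profile features alone (an information-theoretic limit shared by any anonymization scheme, and one that is absorbed into the \emph{simulatability-from-output} definition of privacy used by Goldreich), is where I would spend the most care before declaring the simulators indistinguishable from the real views.
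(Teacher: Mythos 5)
There is a genuine gap, and it sits exactly where you placed your confidence rather than your caution: the claim that the forwarded set $I_h$ has a distribution ``computable from public parameters.'' In Algorithm~\ref{alg:hash}, $I_h$ is not an aggregate and is not hashed before transmission --- it is literally the full item set $I_{u^*}$ of the first user on the walk who decides to contribute. Its distribution therefore depends on all users' private profiles $\{I_u\}$, not merely on the forwarding probabilities $\{\rho_u\}$ and profile sizes (which the paper does not declare public in any case). Consequently neither your server-side simulator nor your per-user simulator can be built as described: a simulator knowing only its party's input and output cannot ``roll a forward random walk'' and emit an $I_h$ with the correct distribution, because the correct distribution is over other users' secret data. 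The same problem defeats the ``posterior-is-public'' program you sketch at the end --- the posterior over the contributor given $I_h$, and indeed the realized $I_h$ itself, are functions of private inputs. For the server this is not a technicality: the server's view genuinely contains full (anonymous) user profiles, which carry strictly more information than its output $\{n_{i,j}/|\mathcal{H}|\}$, so view-simulatability from the output in the strong sense you aim for simply does not hold for the server.

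The paper's own proof avoids this trap by proving a weaker and differently structured claim: the theorem asserts privacy \emph{for users} only, and the proof constructs, for each user $u^*$, a polynomial-time simulator of the messages $u^*$ emits in the two stages of the walk (an empty $I_{h}$, its own $I_{u^*}$, or a relayed non-empty $I_{h}$ that contains none of $u^*$'s data), and then argues unlinkability: from the recipient's perspective $\Pr(I_{h}=I_{u^*})=1/n$, i.e., a received set is equally likely to originate from any user, so no recipient can attribute the data to $u^*$. In other words, the paper's privacy argument is an anonymity argument attached to a simulator for each user's \emph{outgoing} view, not a claim that every party's incoming view is samplable from public data. If you want to salvage your route, you would have to either (i) restrict, as the paper does, to simulating each user's own contribution and argue unlinkability of the origin, or (ii) explicitly weaken the privacy notion to unlinkability/plausible deniability against the server, acknowledging that the anonymous profiles themselves are revealed; as written, your simulators assert a sampling capability they cannot have.
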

\begin{proof} The simulator for Algorithm~\ref{alg:hash} can be constructed as follows:
\begin{itemize}
\item {\emph{Stage 1:}} In this stage, the server randomly chooses $u^*$
to start a random walk or $u^*$ receives an empty $I_{h_k}$.
If $u^*$ decides not to add its data, the output of $u^*$ is an empty
$I_{h_k}$ which can be easily simulated. Otherwise, the simulator for $u^*$
can simulate $I_{h_k}$ with $I_{u^*}$, and then send $I_{h_k}$ to another user $u'$.
From the view of $u'$, $Pr(I_{h_k}==I_{u^*})=1/n$ ($n$ is the number of users),
i.e., $I_{h_k}$ can be from any user in $U$. Thus, $u'$ cannot learn any
information from $I_{h_k}$, no matter $u^*$ added his data or not.
This indicates that the output of $u^*$ is indistinguishable from
what the next user views in real random walk.
\item {\emph{Stage 2:}} In this stage, a user $u^*$ receives a non-empty
$I_{h_k}$. The simulator for $u^*$ can simulate his output with $I_{h_k}$.
No matter to whom $u^*$ chooses to send $I_{h_k}$ (the server or another user),
$I_{h_k}$ does not contain any private information of $u^*$ and the output
of $u^*$ is also indistinguishable from what the next party views in real random walk.
\end{itemize}

The above simulator is linear in the size of $I_{h_k}$, i.e.,
a polynomial-time simulator is successfully constructed for users.
Thus, Algorithm~\ref{alg:hash} is privacy-preserving for users.
\end{proof}

\section{EXPERIMENTS}
\label{sctn::exp}
This section evaluates the efficiency and effectiveness of the proposed approach on three real-world data sets. The first study evaluates how the recommendation accuracy of the proposed method is affected by the key factor $k$ in the proposed algorithm. Then, the absolute error of similarity estimation is further analyzed to help understand the proposed method. 
\subsection{Experimental Setup}
The evaluation data sets are collected from three real-world data sets that have been widely used for evaluating recommendation algorithms. Table~\ref{tb:ds} shows the three data sets in detail.
\begin{table}[h]
    \scriptsize
    \centering
    \caption{Description of Data Sets}
    \label{tb:ds}
    \begin{tabular}{c|c|c|c}
        \hline
        \textbf{Data Set}& \textbf{\# of Users} & \textbf{\# of Items}  & \textbf{Density (\%)}                   \\\hline
         Last.fm    & 17,976 & 8,007  &0.65 \\\hline
        Jester   & 24,983 & 100  &41.97  \\\hline
        MovieLens 20M   & 7,120&131,262 &0.11 \\ \hline
    \end{tabular}
\end{table}
It is necessary to mention that this study changes the ratings from a real number to a binary number in the Jester data set and the MovieLens 20M data set. For instance, the ratings are from -10 to 10 in Jester data set and from 0 to 5 in the MovieLens 20M data set. We change the ratings to 1 if a user has rated an item, and to 0 otherwise. 

For each data set, we split it into train and test sets randomly by setting the ratio between the train set and the test set as 4:1. The results are presented by averaging the results of ten different random train-test splits.
Since the privacy property of the proposed method has been proved theoretically, the evaluation focuses on comparing the efficiency and
the accuracy of the proposed method (PP-IBTN) and an item-based top-N recommendation algorithm~\cite{Deshpande04,li2014item}
(IBTN). 
\subsection{Evaluation Metrics}
This study adopts precision metrics to evaluate the accuracy of the proposed approach, which is defined as follows:
\begin{equation}
\label{eqn:precison}
Precision= \dfrac{|U_i\cap U_r|}{|U_r|},
\end{equation}
where $U_i$ is the set of items that a user rated and $U_r$ is the set of items that are recommended.

Also, Equation~\ref{eqn:ae} defines the absolute error of similarity estimation:
\begin{equation}
\label{eqn:ae}
AE= |\hat { sim_{i,j}} - sim_{i,j}|,
\end{equation}
where $\hat { sim_{i,j}}$ is the similarity between item $i$ and $j$ calculated by the proposed Algorithm~\ref{alg:hash}.

\begin{figure*}
\centering

\begin{minipage}{5.5cm}
\subfloat[Recommendation Efficiency vs. $k$.]{\includegraphics[width=6cm]{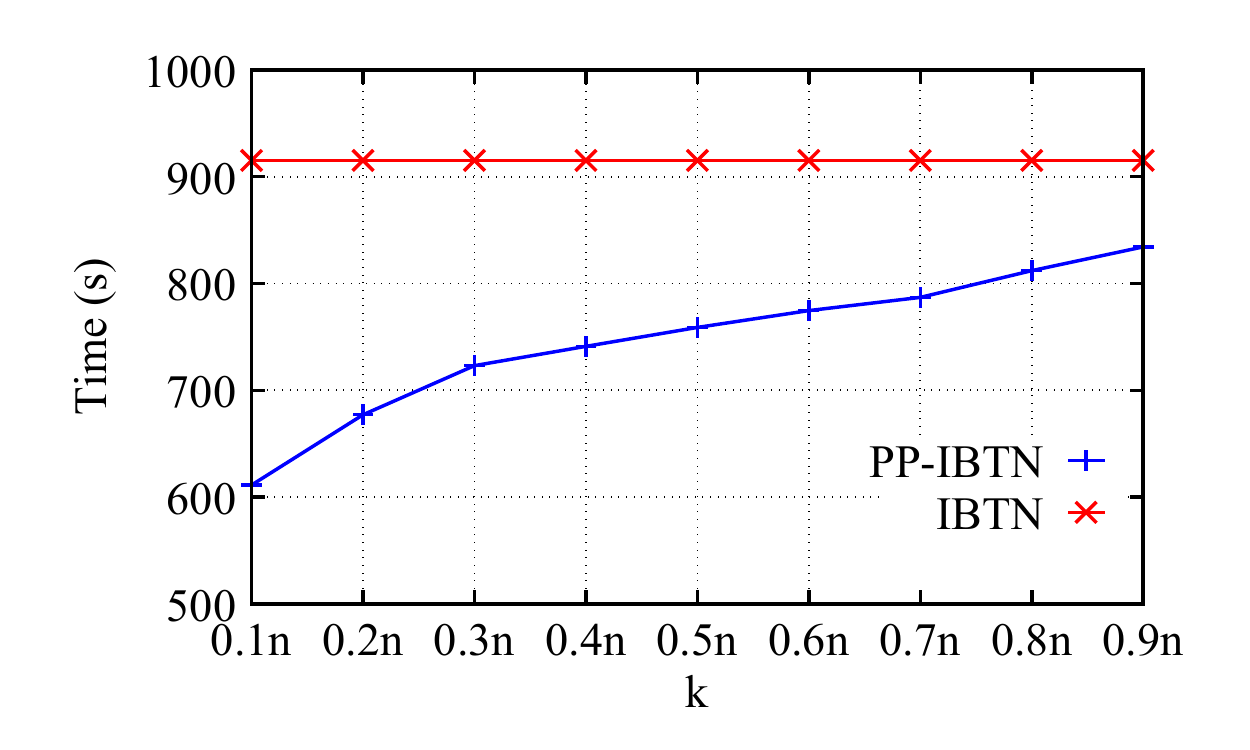}
  \label{fig:efficiency_lf}}
\end{minipage}
\begin{minipage}{5.5cm}
\subfloat[Precision Loss vs. $k$.]{\includegraphics[width=6cm]{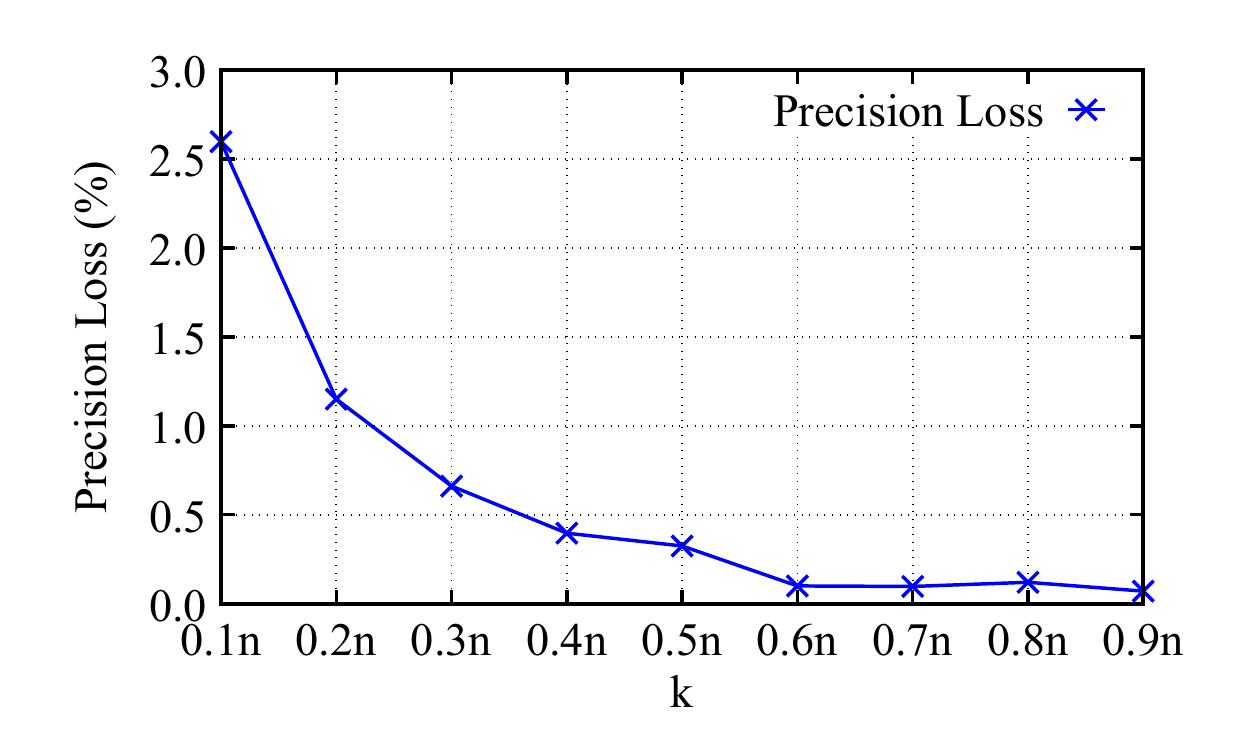}
 \label{fig:precision_lf}}
\end{minipage}
\begin{minipage}{5.5cm}
\subfloat[Similarity Estimation Accuracy.]{\includegraphics[width=6cm]{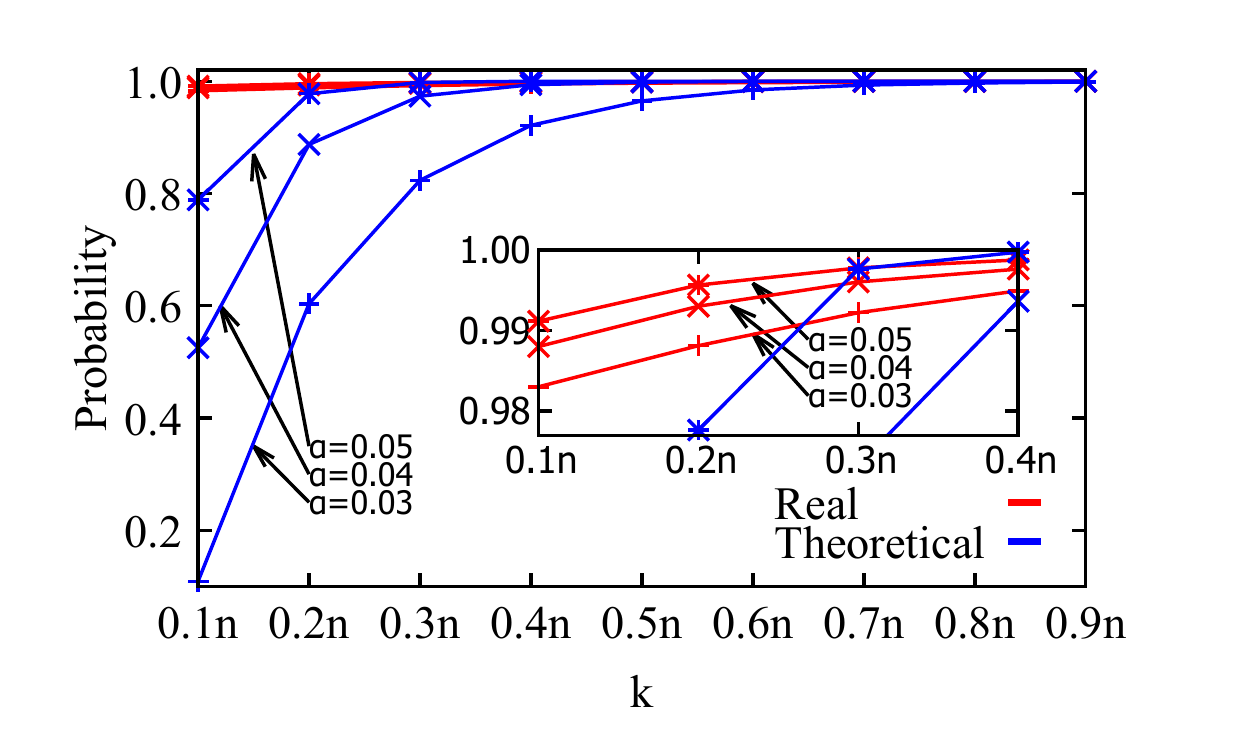}
 \label{fig:similarity_lf}}
\end{minipage}
\caption{Recommendation on Last.fm.}
\end{figure*}
\begin{figure*}
\centering
\begin{minipage}{5.5cm}
\subfloat[Recommendation Efficiency vs. $k$.]{\includegraphics[width=6cm]{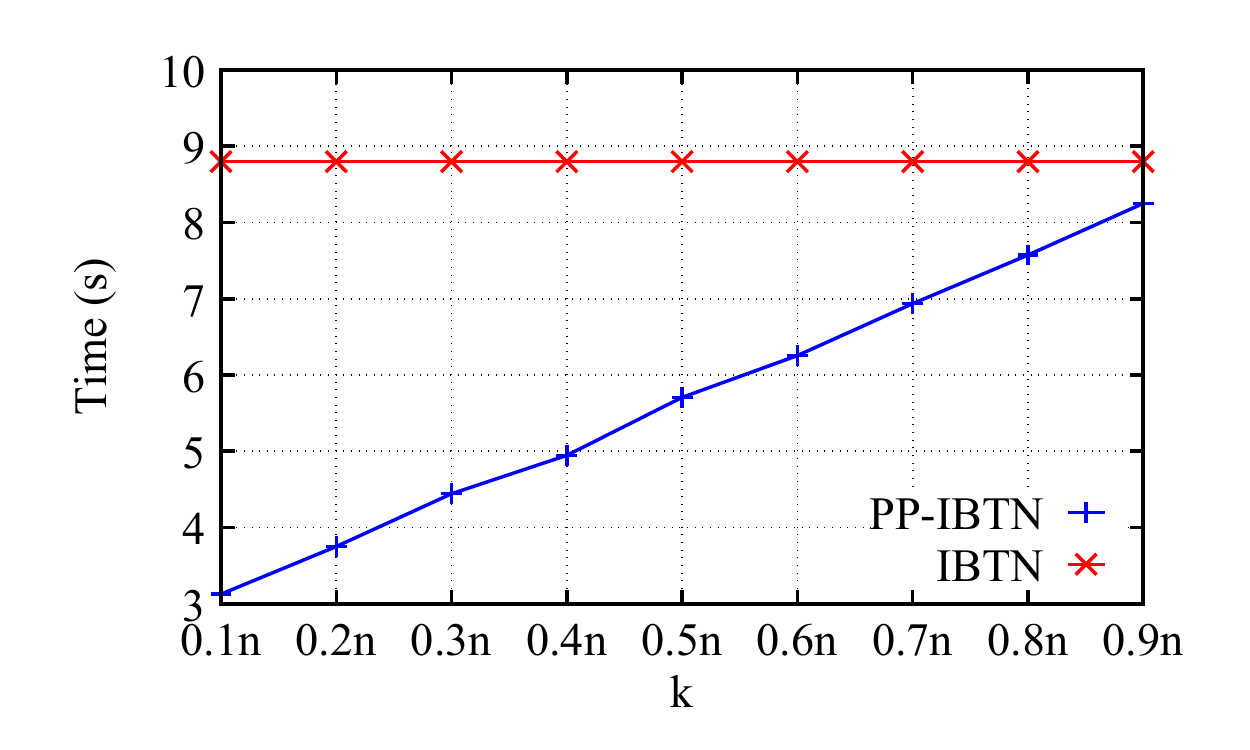}
  \label{fig:efficiency_je}}
\end{minipage}
\begin{minipage}{5.5cm}
\subfloat[Precision Loss vs. $k$.]{\includegraphics[width=6cm]{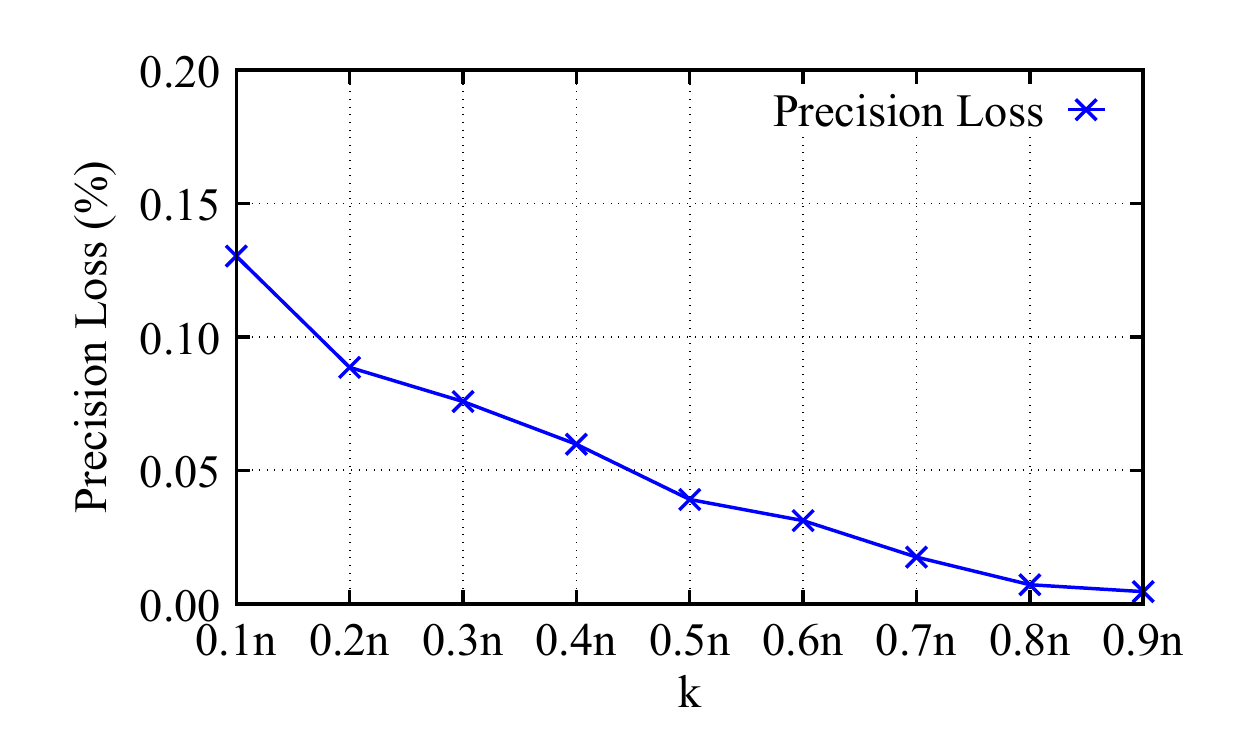}
 \label{fig:precision_je}}
\end{minipage}
\begin{minipage}{5.5cm}
\subfloat[Similarity Estimation Accuracy.]{\includegraphics[width=6cm]{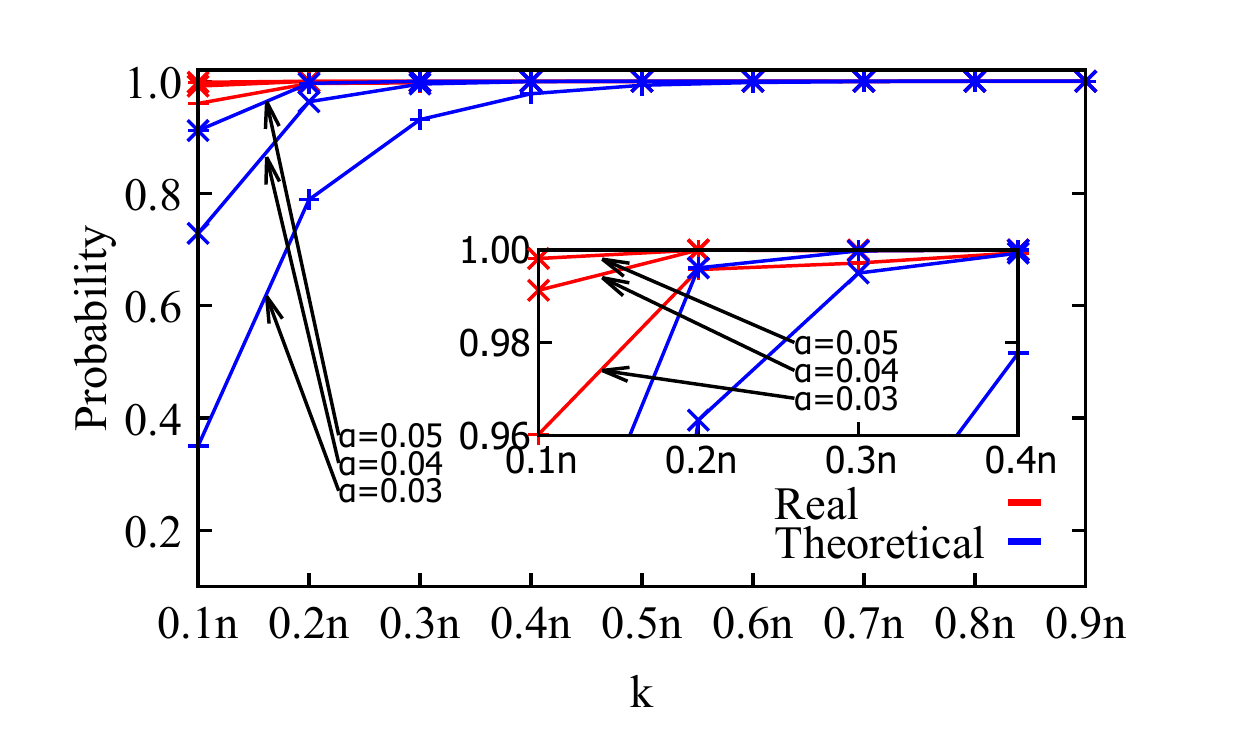}
 \label{fig:similarity_je}}
\end{minipage}
\caption{Recommendation on Jester.}
\end{figure*}

\begin{figure*}
\centering
\begin{minipage}{5.5cm}
\subfloat[Recommendation Efficiency vs. $k$.]{\includegraphics[width=6cm]{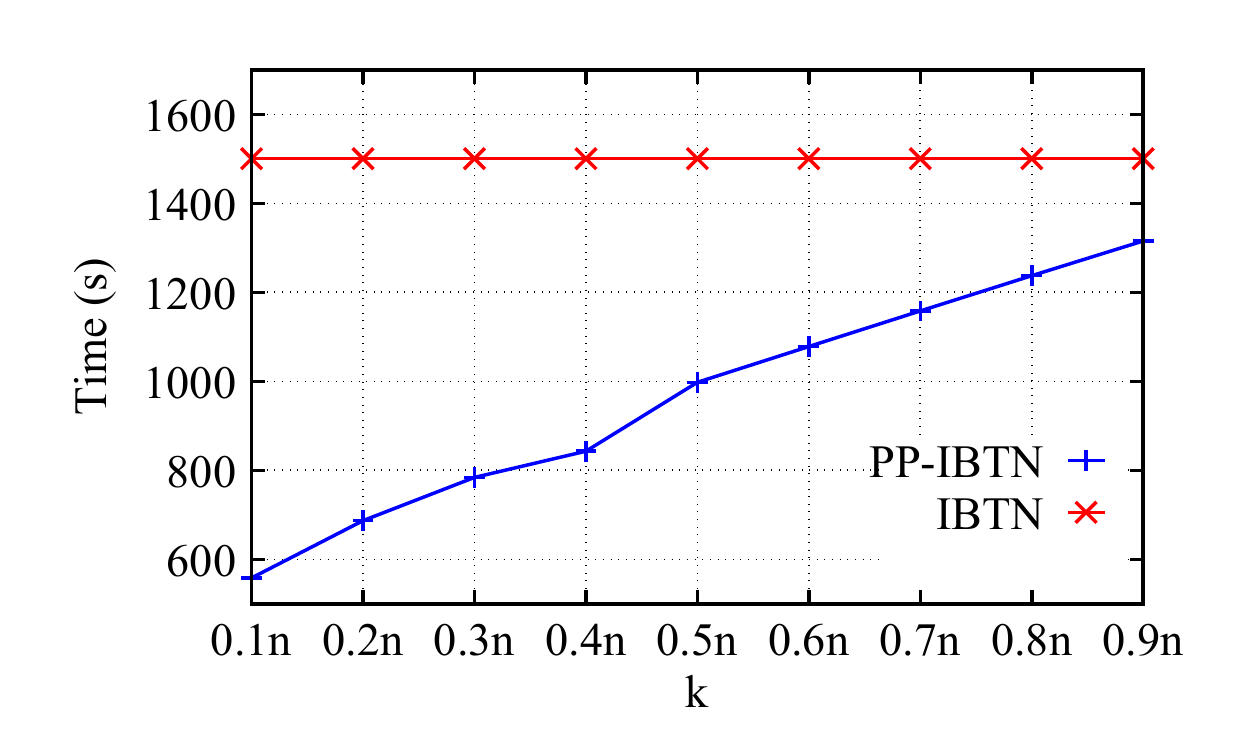}
  \label{fig:efficiency_ml}}
\end{minipage}
\begin{minipage}{5.5cm}
\subfloat[Precision Loss vs. $k$.]{\includegraphics[width=6cm]{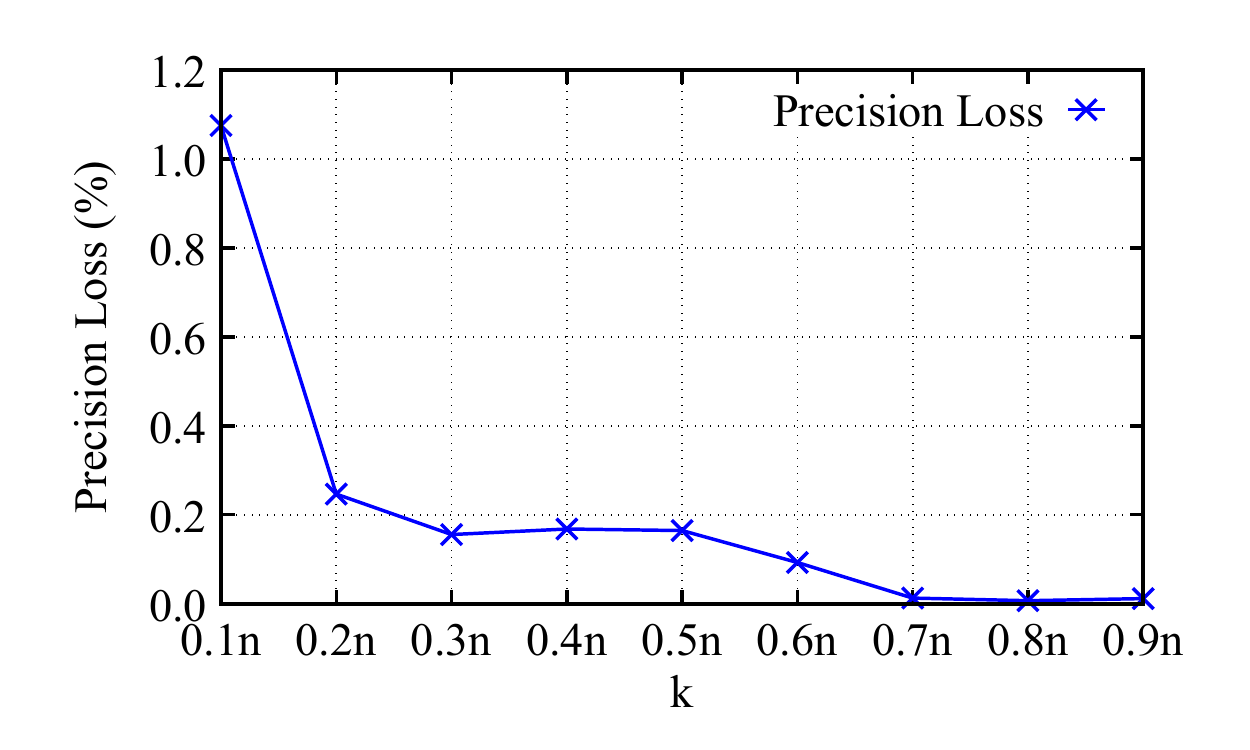}
 \label{fig:precision_ml}}
\end{minipage}
\begin{minipage}{5.5cm}
\subfloat[Similarity Estimation Accuracy.]{\includegraphics[width=6cm]{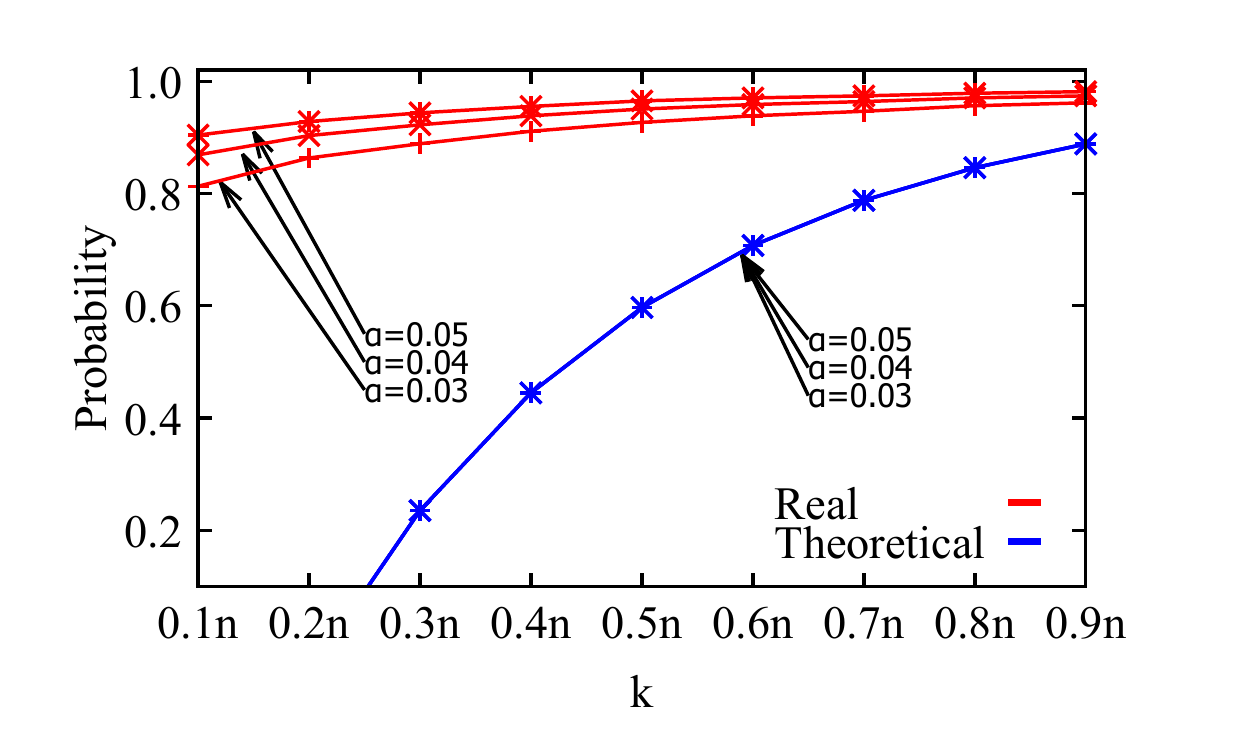}
 \label{fig:similarity_ml}}
\end{minipage}
\caption{Recommendation on MovieLens 20M.}
\end{figure*}

\subsection{Recommendation Efficiency Comparison}
Figure~\ref{fig:efficiency_lf}, Figure~\ref{fig:efficiency_je}, and Figure~\ref{fig:efficiency_ml} show the recommendation efficiency
of the proposed PP-IBTN method and the IBTN method on Last.fm, Jester, and MovieLens 20M, respectively, where $n$ is the number of users
and $k$ is the number of hash functions. 
We can see that for all
$k$ values, the computation
time of the PP-IBTN method is less than that of the IBTN method.
For instance, compared with the IBTN method, when $k=0.3n$, the proposed PP-IBTN method
can reduce the computation time by approximately 20.99\%, 49.50\%, and 47.74\% on Last.fm, Jester, and MovieLens 20M, respectively.
And when $k=0.9n$, the computation time is reduced by approximately 8.85\%, 6.25\%, and 12.32\% on Last.fm, Jester, and MovieLens 20M, respectively. In the proposed PP-IBTN method,
the hash procedure can greatly reduce the density of the data set, thus
the similarity computation step is much more efficient than that of the IBTN method.

\subsection{Recommendation Accuracy Comparison}
Figure~\ref{fig:precision_lf}, Figure~\ref{fig:precision_je}, and Figure~\ref{fig:precision_ml} show the recommendation precision loss
of the proposed PP-IBTN method
relative to the IBTN method on Last.fm, Jester, and MovieLen 20M, respectively.
More specifically, when $k=0.1n$, precision loss is approximately 2.60\%, 0.13\%, and 1.07\% on Last.fm, Jester, and MovieLens 20M, respectively. Recommendation precision
loss decreases to 0 when $k$ increases to $n$ in the three data sets.
Especially, the recommendation precision losses are less than 1\% when $k$ is
larger than $0.3n$, which indicates that the proposed method can achieve
decent accuracy. 

\subsection{Accuracy Analysis of Similarity Estimation}
Figure \ref{fig:similarity_lf}, Figure \ref{fig:similarity_je}, and Figure \ref{fig:similarity_ml} show the probability that the absolute error of
similarity estimation is less than $\alpha$ on Last.fm, Jester, and MovieLens 20M, respectively. We can see from the results that
experimental probabilities are higher than theoretical bounds for $\alpha$ values of 0.03, 0.04, and 0.05, and the probabilities are closer to 1 when $k$ increases.
This confirms that the similarity estimation accuracy is bounded, as in Theorem~\ref{thm:similarity}.
Meanwhile, the results also indicate that recommender systems can choose
different $k$ values to balance between similarity estimation accuracy and efficiency.

\section{Conclusion}
\label{sctn::cnclsn}
Recommender systems have played an essential role in e-commerce in recent years. However, existing solutions for recommendation have limited capabilities when it comes to protecting user privacy while still achieving high scalability. In this work, we have proposed a scalable algorithm for privacy-preserving, item-based top-N recommendations. The proposed algorithm can guarantee the protection of user privacy while significantly enhancing recommendation efficiency with decent recommendation quality. Comprehensive theoretical and experimental analysis demonstrates the efficiency and effectiveness of the proposed approach.

\section*{Acknowledgment}
This work was supported in part by the National Natural Science Foundation of China under Grant No. 61233016, and the National Science Foundation (NSF) of United States under grant No. 1334351 and 1442971.

\bibliographystyle{IEEEtran}
\bibliography{reference}
\end{document}